\documentclass[12pt]{article}

\usepackage[T1]{fontenc}
\usepackage[utf8]{inputenc}
\usepackage{lmodern}
\usepackage[margin=1.12in]{geometry}
\usepackage{microtype}
\usepackage{amsmath,amssymb,amsthm,mathtools}
\usepackage{booktabs,array,enumitem}
\usepackage{graphicx}
\usepackage[authoryear,round]{natbib}
\usepackage{xcolor}
\usepackage{hyperref}
\hypersetup{
 colorlinks=true,
 linkcolor=blue!50!black,
 citecolor=blue!50!black,
 urlcolor=blue!50!black,
 pdftitle={The Fiscal Alibi: Hidden Spending Needs and Government Reputation},
 pdfsubject={Private fiscal needs, reputational cover, and verification},
 pdfkeywords={government reputation, hidden fiscal needs, convex order, auditing, political agency},
 pdfauthor={Georgy Lukyanov and Hengrina Ly}
}
\graphicspath{{figures/}}

\numberwithin{equation}{section}
\setlist{itemsep=0.25em,topsep=0.4em}

\newtheorem{assumption}{Assumption}[section]
\newtheorem{proposition}[assumption]{Proposition}
\newtheorem{theorem}[assumption]{Theorem}
\newtheorem{lemma}[assumption]{Lemma}

\theoremstyle{definition}

\theoremstyle{remark}

\newcommand{\E}{\mathbb E}
\newcommand{\Prob}{\mathop{\rm Pr}}
\newcommand{\dd}{\,d}

\title{\textbf{The Fiscal Alibi}\\
\large Hidden Spending Needs and Government Reputation}
\author{Georgy Lukyanov\thanks{Toulouse School of Economics;
\href{mailto:georgy.lukyanov@tse-fr.eu}{georgy.lukyanov@tse-fr.eu}.}
\and Hengrina Ly\thanks{Université de Rouen Normandie, LERN; ENS Paris-Saclay,
CEPS; \href{mailto:hengrina.ly@univ-rouen.fr}{hengrina.ly@univ-rouen.fr}.}}
\date{September 2026}

\begin{document}
\maketitle

\begin{abstract}
A government may ask for a high tax because it faces a genuine expense. The same demand can also be made by a government that intends to keep the proceeds. We study this ambiguity in a two-period reputation model with privately observed spending needs and an endogenous tax base. The opportunist chooses between outright confiscation and levies that an honest government might impose. We characterize the three possible regimes through a single equilibrium equation. A mean-preserving spread of legitimate needs weakly raises the opportunist's lifetime value, with a strict increase precisely when the spread changes the upper tail relevant for mimicry. We then give a necessary and sufficient curvature condition for this ordering to extend to a general reputational continuation prize. Verification reduces the value of concealment, but also changes how the opportunist extracts. In the benchmark economy, greater auditing weakly lowers current citizen welfare, even before audit costs, while improving the future allocation through better information. The welfare case for verification consequently depends on the balance between these two effects. Examples yield no auditing, an optimum within a regime, and an optimum at the boundary between regimes.
\end{abstract}

\noindent\textbf{Keywords:} government reputation; hidden fiscal needs; taxation; convex order; audits; political agency.\\
\textbf{JEL codes:} D82; D83; H11; H21; H61.

\section{Introduction}

Suppose a government raises taxes to pay for an unexpected expense. Citizens observe the tax demand, but cannot readily establish how much the government needs to spend. A high tax is then consistent with two rather different explanations: the expense may be large, or the government may intend to keep some of the money. This ambiguity matters even if citizens have no difficulty observing how much they pay. What they cannot observe is whether the payment is justified.

The possibility of a genuine expense gives an opportunistic government a reason that citizens cannot immediately reject. It can levy the same amount as an honest government facing a large bill. We refer to this possibility as a fiscal alibi.\footnote{There is no separate report of the spending need in the model. The opportunist chooses a levy, and citizens compare that action with the levies that an honest government could impose. The term \emph{alibi} refers to this inference.} The government has an incentive to preserve the ambiguity because its reputation affects future production and, hence, the amount it can subsequently extract. Its current levy must balance the revenue collected today against the reputation left for tomorrow.

This paper develops a model of that tradeoff. We ask how the distribution of legitimate spending needs affects opportunistic behavior, and whether verifying those needs improves citizen welfare. The two questions are related, but their answers need not have the same sign. Reducing the value of concealment can make a government more willing to confiscate current output. It is therefore necessary to distinguish the rents that a government obtains over its lifetime from the losses that its behavior imposes on citizens at a particular date.

There are two periods and two permanent government types. An honest government privately observes a spending requirement, collects exactly that amount, and finances the required service. An opportunistic government has no valued expenditure. It can collect an ordinary levy that resembles legitimate finance, or confiscate the entire tax base and reveal its type. Citizens supply labor before observing the levy. Their willingness to work depends on the probability that the government will leave the marginal product of their labor with them. Reputation has a continuation value because citizens make another labor-supply decision in the second period, when the honest government respects their property and the opportunist seizes it.

The distinction between an ordinary levy and confiscation is relevant here. An ordinary levy is a fixed assessment at an individual's margin; confiscation takes the individual's whole output. Thus, citizens retain the return to additional work whenever either type uses an ordinary levy. A government that conceals its type can sustain more current production than one that is expected to confiscate immediately. This feature allows us to isolate the effect of reputational cover. We return to proportional taxation in an extension, where the ordinary tax also distorts labor supply and some of the comparative statics change.

Our first result characterizes the equilibrium. An opportunist will not imitate every honest levy. A sufficiently small levy yields too little current revenue, even when it establishes honesty with certainty. Larger levies are attractive, but citizens must then attach a lower probability to honesty. In equilibrium, the loss of continuation value exactly offsets the increase in current revenue throughout the range that the opportunist uses. This indifference condition, together with Bayes' rule, determines the opportunist's likelihood relative to the honest levy distribution.

Integrating that likelihood gives the probability of an ordinary levy and therefore the current tax base. The entire equilibrium can then be recovered from one scalar equation. Depending on the parameters, the opportunist confiscates with certainty, mixes between ordinary levies and confiscation, or uses ordinary levies with certainty. We call the last regime full mimicry. Full mimicry preserves the return to current labor, but does not eliminate extraction: every levy imposed by the opportunistic type is still diverted.

Two qualifications accompany this characterization. First, we assume that even the tax base supported by the honest type alone can finance the largest legitimate need. This keeps the comparison separate from insolvency and political replacement. Second, the uniqueness result concerns an explicitly defined class of regular perfect Bayesian equilibria.\footnote{With continuously distributed levies, individual actions have probability zero. We obtain beliefs from local likelihood ratios rather than allowing an independent choice of belief at each such action. The resulting uniqueness claim is conditional on this restriction; it is not a uniqueness claim for every possible specification of off-path beliefs.} These assumptions allow us to characterize the three regimes and their boundaries without adding a default convention or an announcement game.

The second result concerns fiscal risk. Holding the opportunist's value fixed, its relative likelihood is flat at zero over the range of small needs that it does not imitate. Above that range, the likelihood is increasing and strictly convex in the need. A mean-preserving spread therefore increases the total likelihood of mimicry at the original value. Equilibrium is restored by a weak increase in the opportunist's value. This argument applies across regime boundaries as well as within a regime.

The qualification to this result is economically useful. Additional dispersion has no effect if it only redistributes probability among needs that the opportunist would never imitate. What matters is the upper tail that can justify a sufficiently large levy. We give an exact condition for a strict increase in value in terms of this part of the distribution. The result concerns discounted opportunistic rents; it does not, by itself, establish a corresponding ordering of current diversion or citizen welfare.\footnote{A distribution with uniformly larger needs also raises opportunistic value. That comparison changes the mean. Our risk comparison holds the mean fixed, following the distinction in \citet{RothschildStiglitz1970}.}

We next ask how much this conclusion depends on the continuation economy. For a general increasing reputational prize \(C(\mu)\), where \(\mu\) is the posterior probability of honesty, the relevant curvature condition is \(2C'(\mu)+\mu C''(\mu)\geq0\). This condition is sufficient for the distributional ordering throughout the regular, globally solvent class. It is also necessary for that ordering to hold for every local mean-preserving spread: if the condition fails, one can construct an equilibrium in which a small spread lowers opportunistic value. The terminal labor economy in our benchmark satisfies the condition. The proportional-tax extension shows why it need not hold when reputation must also sustain a distortionary tax base.

Finally, we introduce verification after an ordinary levy has been collected. An audit certifies an honest government and exposes an opportunist. It reduces the continuation reward from successful concealment and eventually makes the opportunist abandon mimicry. However, the adjustment begins before that point. When the opportunist still mimics with certainty, weaker future rewards lead it to collect more through current levies. When it mixes, verification raises the probability of outright confiscation and reduces current production.

These responses imply that, in the benchmark model, current citizen welfare weakly falls with auditing even before administrative costs. This conclusion depends on what the audit does: it reveals information after collection, without recovering the diverted funds or imposing a fine.\footnote{Recovery or a current penalty would introduce an additional deterrent. We omit these instruments to identify the effect of verification through reputation. This is also why the analysis does not amount to the optimal verification contract studied, for example, by \citet{Townsend1979}.} Citizens nevertheless benefit in the second period, when better information allows them to supply more labor under an honest government and avoid supplying it to an exposed opportunist. We derive both components of welfare. Neither dominates in general, and the examples include an optimal audit of zero, an optimum within the full-mimicry regime, and an optimum at the transition to partial mimicry.

\subsection{Related literature}

The reputational mechanism is related to \citet{Phelan2006} and \citet{Lu2013}, in which government behavior and beliefs interact with private economic decisions. More generally, \citet{FudenbergGaoPei2022} and \citet{LuoWolitzky2025} study reputation when actions, observations, and private information shape what behavior can be sustained. Our two-period structure is more restrictive than a recurrent reputation game. Its advantage for the present question is that the distribution of spending needs enters equilibrium through an explicit likelihood function. This makes it possible to compare entire distributions and to identify when the comparison reverses.

The hidden justification for a public expenditure also connects the model to political agency. \citet{RogoffSibert1988}, \citet{CoateMorris1995}, and \citet{BesleySmart2007} examine how information problems affect fiscal choices and political accountability. \citet{GrossmanVanHuyck1988} distinguish excusable contingencies from opportunistic repudiation in sovereign borrowing. In our setting, the ambiguity arises within a tax demand: a levy that covers a large legitimate need is observationally indistinguishable from an opportunistic levy of the same size.

Related questions arise in work on fiscal opacity, budget rules, and volatile resources. \citet{SloofBeetsmaSteinweg2026} study fiscal opacity and debt ceilings, while \citet{Siemroth2024} considers private spending needs and the allocation of funds within organizations. \citet{Cage2009} relates aid volatility to rent extraction under asymmetric information. We focus on the distributional comparison rather than an optimal fiscal rule. The main result identifies which changes in hidden needs raise reputational rents, and the verification analysis shows why reducing those rents need not improve the current allocation.

The information problem differs from that in \citet{AblyatifovLukyanov2026}. There, uncertainty concerns whether an authorized fiscal mandate is delivered. Here, citizens observe the levy but do not know how much legitimate spending was needed in the first place. A large realized need can consequently provide cover for a government with nothing to finance. The distinction is central to the comparative statics, rather than an alternative interpretation of the same signal.

Section \ref{sec:model} sets out the environment and the belief restriction. Section \ref{sec:equilibrium} characterizes equilibrium. Section \ref{sec:risk} studies changes in the need distribution and the role of continuation curvature. Section \ref{sec:audit} derives the effects of verification on behavior and welfare, and Section \ref{sec:numerics} illustrates the results. The appendices contain the proofs, a discussion of solvency, and the proportional-tax extension.

\section{Environment}
\label{sec:model}

\subsection{Government types and spending needs}

There are two periods. The government has a permanent type \(\theta\in\{H,B\}\), and citizens begin with the prior
\begin{equation}
 p=\Prob(\theta=H)\in(0,1).
 \label{eq:prior}
\end{equation}

An honest government, type \(H\), is committed to financing a legitimate public service. In period 1 it privately observes the amount \(G\) required to provide that service, levies \(T=G\), and spends the proceeds accordingly. An opportunistic government, type \(B\), has no valued expenditure and keeps what it collects. Citizens observe the levy, but neither the need nor the contemporaneous use of the money. Section \ref{sec:audit} allows these facts to be verified after collection.

The distribution \(F\) of legitimate needs is common knowledge. It has a continuous density \(f\), strictly positive on a compact interval \([\underline g,\overline g]\). Positive density will allow us to specify beliefs at every levy in the honest government's support. It does not require the opportunist to use that entire support.

There is a unit mass of atomistic citizens, each of whom produces one unit of output per unit of labor. A citizen's per-period utility is
\begin{equation}
 c-\frac{\ell^{1+\gamma}}{1+\gamma},
 \qquad \gamma>0.
 \label{eq:utility}
\end{equation}

An ordinary levy is a fixed per-capita assessment. A citizen can therefore increase consumption by working more without increasing that assessment. Confiscation is different: it takes the citizen's entire output. Citizens choose labor before learning which levy will be imposed.\footnote{The fixed assessment isolates the informational role of the levy. It is not equivalent to a proportional tax. With a proportional tax, an ordinary levy also lowers the marginal return to labor; Appendix \ref{app:rates} examines the resulting change in the argument.}

We allow citizens to borrow against other wealth to meet a fixed assessment, so there is no individual cash-on-hand constraint. Aggregate feasibility nevertheless requires the legitimate expense to be no greater than aggregate output \(L\). The benefit of honest spending is \(B(G)\). For welfare comparisons we use \(B(G)=G\), under which the benefit offsets the honest levy. A different benefit function adds \(p\E[B(G)-G]\) to welfare and does not affect the policy comparisons below.

\subsection{Timing, production, and solvency}

In period 1, Nature draws the government type and, if the type is honest, its spending need. Citizens then choose labor using their prior and the anticipated government strategy. After production, the honest government levies its need. The opportunist chooses either an ordinary levy \(T\in[0,L)\) or confiscation of \(L\). A demand for the whole output is defined to be confiscation, rather than a separate ordinary levy. Citizens observe the action and update their beliefs. In the verification extension, an audit follows an ordinary levy.

There is no new spending need in period 2. Citizens choose labor given the posterior \(\mu\) that the government is honest. The honest government leaves their output with them, whereas the opportunist confiscates it. The expected return to a marginal unit of labor is consequently \(\mu\), and terminal labor is
\begin{equation}
 L_2(\mu)^\gamma=\mu,
 \qquad L_2(\mu)=\mu^{1/\gamma}.
 \label{eq:terminal-labor}
\end{equation}

This output is also the opportunist's terminal seizure prize:
\begin{equation}
 C(\mu)=\mu^{1/\gamma}.
 \label{eq:benchmark-C}
\end{equation}

Both citizens and the government discount terminal payoffs by \(\beta\in(0,1]\). The terminal period thus gives reputation a payoff derived from a labor-supply decision.\footnote{The absence of a second spending shock is an assumption about the continuation economy. It allows us to study the first-period information problem without specifying how a government defaults or is replaced after an unfinanceable need. It should not be read as a stationary model of recurrent public expenditure.}

Let \(m\) denote the probability that the opportunist chooses an ordinary levy. In period 1, citizens retain the marginal return to labor whenever the government is honest or the opportunist uses such a levy. Their common labor choice therefore satisfies
\begin{equation}
 L^\gamma=p+(1-p)m.
 \label{eq:labor}
\end{equation}

The smallest tax base consistent with this condition occurs at \(m=0\). Denote it by
\begin{equation}
 r=p^{1/\gamma}.
 \label{eq:r}
\end{equation}

It is sustained by the possibility that the government is honest, even when an opportunist is expected to confiscate with certainty. We impose the following restriction on legitimate needs.

\begin{assumption}
\label{ass:solvency} The essential supremum of the legitimate need satisfies
\begin{equation}
 0\leq\underline g<\overline g<r<1.
 \label{eq:solvency}
\end{equation}
\end{assumption}

Every honest need can then be financed from the smallest equilibrium base. In particular, an honest levy is strictly smaller than \(L\) and is never confused with the action defined as confiscation. The restriction is stronger than solvency along a particular equilibrium path. Its purpose is to make all three regimes, including the outcome under perfect verification, comparable within one economy. Appendix \ref{app:relax-solvency} explains what is needed when the upper bound on needs exceeds \(r\).

\subsection{Beliefs and equilibrium}

We next specify how citizens interpret an ordinary levy. This requires care because a single levy has probability zero under the continuous honest distribution. Let \(M\) be the opportunist's conditional measure over ordinary levies, with total mass \(m\leq1\), and let \(I_\varepsilon(g)=(g-\varepsilon,g+\varepsilon)\). An assessment is \emph{regular} if it satisfies the following conditions:
\begin{enumerate}[label=(\alph*),leftmargin=2.2em]
\item The honest levy distribution is \(F\). Whenever the denominator is positive, the posterior at an ordinary levy is obtained from the local likelihood ratio
\[
\mu(g)=\lim_{\varepsilon\downarrow0}
\frac{pF(I_\varepsilon(g))}
{pF(I_\varepsilon(g))+(1-p)M(I_\varepsilon(g))}.
\]
\item This limit exists at every \(g\in[\underline g,\overline g]\).
\item Confiscation, any singular component of \(M\), and ordinary levies outside the support of \(F\) receive posterior zero.
\end{enumerate}

The first two conditions make beliefs consistent with the relative probabilities of nearby levies. The third treats actions that the honest type does not generate as evidence of opportunism. We will derive absolute continuity of the opportunist's equilibrium levy measure; it is not imposed on its strategy at the outset.

A regular perfect Bayesian equilibrium is a regular assessment in which citizens and the opportunist act sequentially rationally, while the honest type follows its commitment to levy and spend \(G\). The uniqueness results below concern outcomes in this class.\footnote{The local-likelihood requirement restricts beliefs at null actions. We do not prove that every sequence of fully mixed perturbations selects this requirement, and do not assert uniqueness when beliefs at such actions are left unrestricted.}

\section{Equilibrium: betrayal and mimicry}
\label{sec:equilibrium}

Consider first the opportunist's choice after labor has been supplied. Confiscation yields \(L\) immediately and reveals the government's type, leaving no terminal tax base. An ordinary levy can yield less today while preserving a positive continuation payoff. Denote the opportunist's equilibrium lifetime value by \(V\). On every levy that it uses with positive likelihood, the two components of its return must add up to this value:

\begin{equation}
 g+\beta\mu(g)^{1/\gamma}=V.
 \label{eq:indifference}
\end{equation}
The equilibrium ordinary-levy measure is absolutely continuous with respect to \(F\), as established in the proof below. Write \(h=\dd M/\dd F\) for its relative likelihood. This is a likelihood relative to the honest distribution, rather than a density with respect to the levy itself. Bayes' rule gives

\begin{equation}
 \mu(g)=\frac{p}{p+(1-p)h(g)}.
 \label{eq:bayes-rn}
\end{equation}
Substitution into the indifference condition determines that relative likelihood:

\begin{equation}
 h_V(g)=\frac{p}{1-p}\psi_V(g),
 \qquad
 \psi_V(g)=
 \left[\left(\frac{\beta}{V-g}\right)^\gamma-1\right]_+,
 \label{eq:rn-density}
\end{equation}
Confiscation is always available, so \(V\geq L\geq r>\overline g\). The denominator in \eqref{eq:rn-density} is therefore positive. The positive-part operator has a direct interpretation. If \(g\leq V-\beta\), even a posterior of one cannot make the levy strictly preferable to the equilibrium return. The opportunist assigns it no positive density. We call this the frugal range. At larger levies, indifference requires progressively less favorable beliefs. The resulting posterior is

\begin{equation}
 \mu_V(g)=
 \begin{cases}
 1,&g\leq V-\beta,\\[2pt]
 \left((V-g)/\beta\right)^\gamma,&g>V-\beta,
 \end{cases}
 \qquad g\in[\underline g,\overline g].
 \label{eq:posterior}
\end{equation}
Integrating the relative likelihood gives the probability that the opportunist uses an ordinary levy:

\begin{equation}
 m(V)=\frac{p}{1-p}J_F(V),
 \qquad
 J_F(V)=\int\psi_V(g)\dd F(g).
 \label{eq:J}
\end{equation}
Thus \(J_F(V)\) aggregates the scope for mimicry at a given value. The corresponding ordinary-levy density with respect to \(g\) is \(pf(g)\psi_V(g)/(1-p)\). The function \(\psi_V\) will also be the object used to compare need distributions. Its properties follow from the tradeoff already described: a larger current levy can compensate for a worse reputation, whereas a higher equilibrium value makes each particular levy less attractive.

\begin{lemma}
\label{lem:J} For \(V>\overline g\), \(J_F(V)\) is continuous and non-increasing.  It is strictly decreasing on \((\overline g,\overline g+\beta)\) and equals zero for \(V\geq\overline g+\beta\).  For each fixed \(V\), \(\psi_V(g)\) is non-decreasing and convex in \(g\), flat below \(V-\beta\), and strictly convex above it.
\end{lemma}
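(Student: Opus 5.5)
I would start with the pointwise properties of \(\psi_V\), since the claims about \(J_F\) follow from them by integration. Fix \(V>\overline g\) and write \(\phi_V(g)=(\beta/(V-g))^\gamma-1\) on \(g<V\), so that \(\psi_V=[\phi_V]_+\).

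A direct computation gives
\[
\phi_V'(g)=\gamma\beta^\gamma(V-g)^{-\gamma-1}>0,
\qquad
\phi_V''(g)=\gamma(\gamma+1)\beta^\gamma(V-g)^{-\gamma-2}>0 .
\]
Hence \(\phi_V\) is strictly increasing and strictly convex on \((-\infty,V)\), which contains \([\underline g,\overline g]\). The sign of \(\phi_V\) changes exactly where \(V-g=\beta\). Below that point \(\phi_V\le 0\), so \(\psi_V\equiv 0\) on the frugal range \(g\le V-\beta\). Above it, \(\psi_V=\phi_V\) is strictly increasing and strictly convex. The maximum of the two convex functions \(0\) and \(\phi_V\) is convex and non-decreasing, and this gives the last sentence of the lemma.

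For \(J_F\), the plan uses the other argument. For fixed \(g\), the map \(V\mapsto\phi_V(g)\) is strictly decreasing and continuous on \(V>g\). So \(V\mapsto\psi_V(g)\) is continuous and non-increasing on \(V>\overline g\), and strictly decreasing wherever it is positive, that is, wherever \(g>V-\beta\).

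For continuity of \(J_F\) at a point \(V_0>\overline g\), I would restrict to \(V\ge V_1\) with \(\overline g<V_1<V_0\). There \(\psi_V(g)\le(\beta/(V_1-\overline g))^\gamma\) uniformly in \(g\in[\underline g,\overline g]\). Dominated convergence then gives \(J_F(V)\to J_F(V_0)\). Monotonicity of \(J_F\) is inherited pointwise from that of \(\psi_V(g)\).

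The main point needing care is the strict decrease on \((\overline g,\overline g+\beta)\). Take \(\overline g<V<V'<\overline g+\beta\). For \(g\in(V'-\beta,\overline g]\), which is an interval of positive length, we have \(\psi_{V'}(g)<\psi_V(g)\), since \(\psi_V(g)>0\) there and the strict decrease applies. Because \(f\) is strictly positive on \([\underline g,\overline g]\), this interval carries positive \(F\)-mass. If \(V'-\beta<\underline g\), use the whole support instead. Combined with the weak inequality elsewhere, this gives \(J_F(V')<J_F(V)\).

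Finally, if \(V\ge\overline g+\beta\), then every \(g\le\overline g\) satisfies \(g\le V-\beta\). So \(\psi_V\equiv0\) on the support and \(J_F(V)=0\).
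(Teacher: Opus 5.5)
Your proof is correct and follows the paper's own argument. You use the derivatives on the active branch for the shape of \(\psi_V\), pointwise monotonicity in \(V\) plus dominated convergence for continuity, positive \(F\)-mass on the active set for the strict decrease, and the vanishing of the kernel once \(V\geq\overline g+\beta\). The only cosmetic difference is that you get global convexity by taking \(\psi_V\) as the maximum of \(0\) and the convex \(\phi_V\), whereas the paper checks that the slope jumps up from \(0\) to \(\gamma/\beta\) at the cutoff.
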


It remains to make these choices consistent with labor supply and with probabilities adding up to one. If confiscation occurs with positive probability, the opportunist must be indifferent between it and its ordinary levies, so \(V=L\). If it never confiscates, \(m=1\) and labor reaches its undistorted level \(L=1\), while the value of mimicry must be at least one. These observations give the following classification.

\begin{theorem}
\label{thm:classification} Under Assumption \ref{ass:solvency}, there is a unique regular equilibrium outcome. Let \(r=p^{1/\gamma}\).
\begin{enumerate}[label=(\roman*),leftmargin=2.3em]
\item \textbf{Betrayal only.}  If
 \begin{equation}
  r\geq\overline g+\beta,
  \label{eq:betrayal-condition}
 \end{equation}
the opportunist confiscates surely and
 \begin{equation}
  V=L=r,\qquad m=0.
  \label{eq:betrayal-outcome}
 \end{equation}

\item \textbf{Mimicry and betrayal.}  If
 \begin{equation}
  r<\overline g+\beta,
  \qquad p[1+J_F(1)]<1,
  \label{eq:interior-condition}
 \end{equation}
there is a unique \(V\in(r,1)\) solving
 \begin{equation}
  V^\gamma=p[1+J_F(V)].
  \label{eq:interior-root}
 \end{equation}
The equilibrium probabilities and labor are
 \begin{equation}
  m=\frac{V^\gamma-p}{1-p},
  \qquad e=\frac{1-V^\gamma}{1-p},
  \qquad L=V,
  \label{eq:interior-outcome}
 \end{equation}
where \(e\) is the opportunist's confiscation probability.

\item \textbf{Full mimicry.}  If
 \begin{equation}
  r<\overline g+\beta,
  \qquad p[1+J_F(1)]\geq1,
  \label{eq:full-condition}
 \end{equation}
there is a unique \(V\in[1,\overline g+\beta)\) satisfying
 \begin{equation}
  J_F(V)=\frac{1-p}{p}.
  \label{eq:full-root}
 \end{equation}
The opportunist always mimics, \(m=1\), confiscation is absent, and \(L=1\). Equality in \eqref{eq:full-condition} gives \(V=1\).
\end{enumerate} Across all three regimes, \(V\) is equivalently the unique root of
\begin{equation}
 \boxed{p[1+J_F(V)]=\min\{V^\gamma,1\}.}
 \label{eq:unified-root}
\end{equation}
The two regime transitions are continuous.
\end{theorem}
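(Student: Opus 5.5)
The proof has two parts. First, show that any regular equilibrium has the structure described before the statement, with \(V\) pinned down by \eqref{eq:unified-root}. Then show that this equation has exactly one root and that each regime's conditions select it.

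\textbf{Step 1: necessary structure.} Let \(V\) be the opportunist's equilibrium value; since confiscation is available, \(V\geq L\geq r>\overline g\).

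The first task is to rule out ordinary levies outside \([\underline g,\overline g]\) and singular components of \(M\). By condition (c) these actions receive posterior zero, so they pay at most \(T<L\leq V\) and are never optimal.

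Next, \(M\ll F\). Suppose instead that \(M\) had a part singular with respect to \(F\) inside the support. The Radon--Nikodym derivative of \(M\) with respect to \(F\) would then be infinite \(M\)-a.e. on that part, the local likelihood limit would give \(\mu=0\), and the same revenue argument would exclude it. So we may write \(h=\dd M/\dd F\) and get \eqref{eq:bayes-rn}.

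Sequential rationality gives two conditions. Wherever \(h>0\), we need \(g+\beta\mu(g)^{1/\gamma}=V\). Wherever \(h=0\), \(\mu=1\), and optimality requires \(g+\beta\leq V\). Solving these pointwise yields \(h=h_V\) \(F\)-a.e., which is \eqref{eq:rn-density}. The positive part appears because when \(g>V-\beta\), \(h=0\) would make the levy strictly profitable, so \(h\) must be positive there.

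It remains to verify condition (b), that the limit exists everywhere. This holds because \(f\) and \(\psi_V\) are continuous, so the local ratio converges to \(p/(p+(1-p)h_V(g))\). Integrating then gives \(m=\frac{p}{1-p}J_F(V)\).

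\textbf{Step 2: the scalar equation.} Combine the labor condition \eqref{eq:labor} with the confiscation margin. If \(e>0\), indifference gives \(V=L\), so \(V^\gamma=p+(1-p)m=p[1+J_F(V)]\), and \(m\leq1\) forces \(V\leq1\). If \(e=0\), then \(m=1\), \(L=1\), and confiscation must be weakly worse, so \(V\geq1\) with \(J_F(V)=(1-p)/p\). Both cases are captured by \eqref{eq:unified-root}.

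Conversely, given a root \(V\), I would construct the equilibrium: \(M\) with density \(pf\psi_V/(1-p)\), posteriors \eqref{eq:posterior}, and \(e=1-m\). Then I would check optimality. On the support the opportunist is indifferent. In the frugal range a levy pays \(g+\beta\leq V\). Confiscation pays \(L=V\) when \(V\leq1\), and pays \(1\leq V\) otherwise. Citizens' labor is optimal by construction. Betrayal only is the degenerate case \(J_F(r)=0\), where \(V=r\) and \(m=0\).

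\textbf{Step 3: uniqueness and classification.} The left side of \eqref{eq:unified-root} is continuous and non-increasing in \(V\) by Lemma~\ref{lem:J}, and strictly decreasing on \((\overline g,\overline g+\beta)\). The right side \(\min\{V^\gamma,1\}\) is continuous and non-decreasing, strictly increasing on \((0,1)\). Hence the difference is strictly monotone wherever either side varies strictly, which gives at most one root on \([r,\infty)\). One subtlety needs care: both sides could be flat together where \(V\geq1\) and \(V\geq\overline g+\beta\). There the left side equals \(p<1\), so no root lies in that region.

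Existence and regime membership follow from evaluating the difference at \(V=r\), at \(V=1\), and at \(V=\overline g+\beta\):
\begin{itemize}
\item If \(r\geq\overline g+\beta\), then \(J_F(r)=0\) and \(V=r\).
\item If \(r<\overline g+\beta\), then at \(V=r\) the left side exceeds the right side, since \(J_F(r)>0\). The sign at \(V=1\) of \(p[1+J_F(1)]-1\) then places the root in \((r,1)\) (regime (ii)) or in \([1,\overline g+\beta)\) (regime (iii)). In the latter case the root lies below \(\overline g+\beta\) because \(p<1\), and equality in \eqref{eq:full-condition} gives \(V=1\).
\end{itemize}
The formulas for \(m\), \(e\), and \(L\) then follow. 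Continuity of the transitions follows because the root of a continuous, strictly crossing equation varies continuously with the parameters, and at each boundary the neighboring regimes agree (\(V=r,\ m=0\) at the first; \(V=1,\ m=1\) at the second).

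\textbf{Main obstacle.} I expect the hardest part to be Step 1's rigorous exclusion of singular or atomic components of \(M\) inside the honest support, together with justifying that the local-likelihood limit equals \(p/(p+(1-p)h)\) everywhere. I would first handle this with the Lebesgue--Besicovitch differentiation theorem for \(M\) relative to \(F\). Then I would use condition (b) to upgrade the a.e. statements to the everywhere form, and use the continuity of \(\psi_V\) to verify existence of the limit in the constructed equilibrium.
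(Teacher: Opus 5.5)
Your proposal is correct and follows essentially the same route as the paper's proof. You exclude singular and off-support levies via condition (c) and domination by confiscation, identify $h=h_V$ at common Lebesgue points, and split on whether confiscation is used to obtain \eqref{eq:unified-root}. You then classify the regimes by the single crossing from Lemma~\ref{lem:J} and the sign of $p[1+J_F(1)]-1$, and verify deviations for the constructed candidate. Two small remarks: your explicit check that no root lies where $V\geq\max\{1,\overline g+\beta\}$ spells out a step the paper leaves implicit, and the existence of the limit in condition (b) is something to verify only for the constructed candidate, not in the necessity step.
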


The scalar equation combines the government's incentives with the citizens' response. Its left-hand side is the marginal return to current labor implied by mimicry. Its right-hand side imposes indifference with confiscation whenever that action is used, and probability adding-up when it is not. The left-hand side is decreasing over the active range; the right-hand side increases up to one. Global solvency ensures that the crossing occurs where all legitimate levies remain feasible.

It is useful to distinguish full mimicry from honest behavior. More mimicry supports a larger tax base because it reduces the frequency of complete confiscation. Once \(m=1\), that base cannot increase further. Yet the opportunist continues to keep every ordinary levy it collects. A large base can therefore coexist with extraction that is concealed in legitimate-looking tax demands.

\section{Spending risk and opportunistic rents}
\label{sec:risk}

\subsection{The global convex-order result}

We now hold the prior, preferences, and continuation economy fixed, and change the distribution of legitimate needs. The distributions may have different compact supports, but each must satisfy Assumption \ref{ass:solvency}. Write \(F_1\preceq_{cx}F_2\) when \(F_2\) is a mean-preserving spread of \(F_1\): the means coincide and \(\E_{F_2}\phi(G)\geq\E_{F_1}\phi(G)\) for every convex function \(\phi\) \citep{RothschildStiglitz1970}.

The comparison can be made in two steps. At the value generated by \(F_1\), convexity of \(\psi_V\) implies that the spread weakly raises \(J_F\). The left-hand side of the equilibrium equation consequently rises at the original solution. Since the difference between the two sides decreases through its crossing, the new equilibrium value cannot be lower. No assumption that the two economies remain in the same regime is needed.

\begin{theorem}
\label{thm:convex-order} Suppose \(F_1\preceq_{cx}F_2\), with each distribution satisfying global solvency.  Then
\begin{equation}
 V(F_2)\geq V(F_1).
 \label{eq:value-order}
\end{equation}
Across all regimes,
\begin{equation}
 V(F_2)>V(F_1)
 \quad\Longleftrightarrow\quad
 J_{F_2}(V(F_1))>J_{F_1}(V(F_1)).
 \label{eq:strictness}
\end{equation}
In the interior regime, a strict value increase raises \(m\) and \(L\) and lowers \(e\). In the full-mimicry regime, \(m=L=1\) already.
\end{theorem}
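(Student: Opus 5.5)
The plan is to work entirely with the unified equation \eqref{eq:unified-root} from Theorem \ref{thm:classification}. Define, for a distribution \(F\) satisfying Assumption \ref{ass:solvency},
\[
\Phi_F(V)=p[1+J_F(V)]-\min\{V^\gamma,1\},\qquad V>\overline g_F,
\]
so that \(V(F)\) is the unique zero of \(\Phi_F\). Two properties will be established. The first is \emph{single crossing}: \(\Phi_F\) is continuous and strictly decreasing wherever it is nonnegative near the root. The second is the \emph{pointwise dominance} \(\Phi_{F_2}(V)\geq\Phi_{F_1}(V)\) at \(V=V(F_1)\).

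For single crossing, Lemma \ref{lem:J} gives that \(J_F\) is continuous and non-increasing. The map \(V\mapsto\min\{V^\gamma,1\}\) is strictly increasing on \(V<1\) and constant for \(V\geq1\). On \(V<1\) the difference is therefore strictly decreasing. On \(V\geq1\) it is strictly decreasing wherever \(J_F>0\), that is, for \(V<\overline g_F+\beta\), again by Lemma \ref{lem:J}. Beyond \(\overline g_F+\beta\) we have \(\Phi_F=p-1<0\). Hence \(\Phi_F(V)>0\) for \(V<V(F)\) and \(\Phi_F(V)<0\) for \(V>V(F)\), provided both points lie in the relevant domain.

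The domain issue needs care because the supports differ. Here I will use that \(V(F)\geq r>\overline g_{F_i}\) for both \(i\), by solvency and the fact that confiscation is available. The comparison can thus be made on \([r,\infty)\), where \(\psi_V\) is well defined for both distributions.

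For pointwise dominance, fix \(V_1=V(F_1)\). By Lemma \ref{lem:J}, \(g\mapsto\psi_{V_1}(g)\) is convex on \((-\infty,V_1)\), and this interval contains both supports. Two caveats must be handled: convex order is defined with convex functions on \(\mathbb R\), and the supports lie in \([0,r)\). I will therefore extend \(\psi_{V_1}\) convexly, for instance by its tangent line beyond \(r\), so the extension agrees with \(\psi_{V_1}\) on both supports. The definition of \(F_1\preceq_{cx}F_2\) then gives \(J_{F_2}(V_1)\geq J_{F_1}(V_1)\). It follows that \(\Phi_{F_2}(V_1)\geq\Phi_{F_1}(V_1)=0\), and by single crossing for \(F_2\), \(V(F_2)\geq V_1\). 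This proves \eqref{eq:value-order}.

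For the strictness equivalence \eqref{eq:strictness}, suppose first that \(J_{F_2}(V_1)>J_{F_1}(V_1)\). Then \(\Phi_{F_2}(V_1)>0\), so \(V_1\neq V(F_2)\), and therefore \(V(F_2)>V_1\). Conversely, suppose the two values of \(J\) are equal. Then \(\Phi_{F_2}(V_1)=0\), and uniqueness of the root gives \(V(F_2)=V_1\). This step is where I expect the main subtlety. It requires that the uniqueness in Theorem \ref{thm:classification} apply to \(F_2\) on the whole half-line \(V>\overline g_{F_2}\), including the flat part \(V\geq\overline g_{F_2}+\beta\), where \(\Phi=p-1\neq0\). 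That is exactly what the single-crossing argument supplies.

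Finally, the behavioural consequences follow directly from \eqref{eq:interior-outcome}. In the interior regime, \(m=(V^\gamma-p)/(1-p)\), \(L=V\) and \(e=(1-V^\gamma)/(1-p)\). These expressions are monotone in \(V\) in the stated directions, so a strict increase in \(V\) raises \(m\) and \(L\) and lowers \(e\). The claim must also cover the case where \(F_2\) lies in a different regime. If \(F_2\) is in full mimicry, then \(m=L=1\geq\) its interior counterparts. If \(F_1\) is in betrayal, then \(m=0\) and \(L=r\), and these are weakly dominated. Consistent with the statement, I will state the strict monotonicity only within the interior regime and note that the full-mimicry case is trivial.
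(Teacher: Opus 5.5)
Your proposal is correct and follows the same route as the paper's proof. You show \(J_{F_2}(V(F_1))\geq J_{F_1}(V(F_1))\) from convexity of \(\psi_V\), use the single downward crossing of the unified equation to get both the weak ordering and the exact strictness condition, and read the behavioral claims off the regime formulas. Your explicit convex extension of \(\psi_{V_1}\) beyond \(r\), which lets the \(\mathbb R\)-based definition of convex order apply, is a detail the paper leaves implicit.
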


The strictness condition identifies which spending risks matter. At the original value, the kernel is zero below \(c=V-\beta\). A spread entirely confined to that frugal range changes neither \(J_F\) nor the equilibrium. By contrast, additional risk can increase the opportunities for mimicry when it reaches levies above the cutoff.

To express this distinction directly in terms of tails, consider any fixed \(V>\overline g^\vee\), where \(\overline g^\vee=\max\{\overline g_1,\overline g_2\}\). For \(c<\overline g^\vee\), the difference in mimicry functionals has the following representation:

\begin{align}
 J_{F_2}(V)-J_{F_1}(V)
 &=\frac{\gamma}{\beta}
 \left\{\E_{F_2}(G-c)_+-\E_{F_1}(G-c)_+\right\}
 \notag\\
 &\quad+\int_c^{\overline g^{\vee}}
 \frac{\gamma(\gamma+1)\beta^\gamma}{(V-t)^{\gamma+2}}
 \left\{\E_{F_2}(G-t)_+-\E_{F_1}(G-t)_+\right\}\dd t.
 \label{eq:stop-loss}
\end{align}
The quantity \(\E(G-t)_+\) is the expected part of the need above a threshold \(t\). Convex order weakly increases it at every threshold. Equation \eqref{eq:stop-loss} weights these tail differences only at and above the mimicry cutoff, with strictly positive weights. If the cutoff is at or above both supports, the kernels and their difference are zero.

This representation also explains the limit of the welfare interpretation. The theorem orders the opportunist's discounted return, which includes both current revenue and its future seizure opportunity. It does not separately order current diversion. Nor does the improvement in current production when confiscation becomes less frequent establish a citizen-welfare gain.

A first-order stochastic increase in legitimate needs gives a separate comparison. Since \(\psi_V\) is increasing, such an increase also weakly raises value by the same crossing argument. It should be distinguished from the risk experiment, which keeps expected legitimate spending fixed.

\subsection{A necessary and sufficient curvature condition}

The power form of the terminal prize follows from the benchmark labor economy. To see which part of the risk result depends on that economy, replace the opportunist's terminal prize by a general function \(C\). Keep the first-period fixed levy, labor supply, and global solvency assumptions unchanged. Suppose \(C:[0,1]\to[0,1]\) is continuous, twice continuously differentiable on \((0,1)\), satisfies \(C(0)=0\), \(C(1)=1\), and has \(C'(\mu)>0\) in the interior.

The normalization retains a zero prize for an exposed opportunist and a unit prize at complete trust. Indifference on a mimicked levy now takes the form

\begin{equation}
 g+\beta C(\mu)=V.
 \label{eq:general-indifference}
\end{equation}
Inverting \(C\) gives the posterior needed to support any candidate return. Bayes' rule then converts that posterior into a relative likelihood. For \(x\in(0,1]\), write

\begin{equation}
 K_C(x)=\frac{1}{C^{-1}(x)}-1.
 \label{eq:Kc}
\end{equation}
The analogue of \(\psi_V\), including the range in which the opportunist does not mimic, is

\begin{equation}
 \kappa_{V,C}(g)=
 \begin{cases}
 0,&g\leq V-\beta,\\[2pt]
 K_C((V-g)/\beta),&g>V-\beta.
 \end{cases}
 \label{eq:general-claimed-kernel}
\end{equation}
Its integral is

\begin{equation}
 J_{F,C}(V)=\int \kappa_{V,C}(g)\dd F(g),
 \label{eq:J-general-C}
\end{equation}
The piecewise definition keeps the inverse within its domain. The equilibrium equation remains \eqref{eq:unified-root}, with \(J_{F,C}\) replacing \(J_F\). Monotonicity of \(C\) is enough for the classification and the unique crossing. The distributional comparison requires an additional curvature restriction.

\begin{theorem}
\label{thm:curvature} The likelihood kernel generated by \(C\) is convex in the claimed need for every \(V\) if and only if
\begin{equation}
 \boxed{2C'(\mu)+\mu C''(\mu)\geq0
 \quad\text{for every }\mu\in(0,1).}
 \label{eq:C-condition}
\end{equation}
Under \eqref{eq:C-condition}, Theorem \ref{thm:convex-order} extends to \(C\) in every regular, globally solvent regime.  Conversely, if \eqref{eq:C-condition} fails at some \(\mu_0\), there are parameters and a sufficiently small local mean-preserving spread around an active need for which the opportunist's equilibrium value strictly falls.
\end{theorem}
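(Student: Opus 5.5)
The plan is to reduce convexity of the kernel in the claimed need to convexity of \(K_C\) in its argument, and then to compute the sign of \(K_C''\) directly. Fix \(V\) and set \(x=(V-g)/\beta\). This substitution is affine and decreasing in \(g\), so on \(g>V-\beta\) the map \(g\mapsto\kappa_{V,C}(g)\) is convex iff \(K_C\) is convex on the relevant subinterval of \((0,1)\).

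Two facts handle the kink at \(g=V-\beta\). First, \(C(1)=1\) gives \(K_C(1)=0\), so \(\kappa_{V,C}\) is continuous there. Second, \(C'>0\) makes \(C^{-1}\) increasing, so \(K_C\) is decreasing in \(x\) and \(\kappa_{V,C}\) is non-decreasing in \(g\). Hence the left derivative at the kink (zero) is at most the right derivative, and no concavity is created there. As \(V\) ranges over all values, every subinterval of \((0,1)\) is reached, so "convex for every \(V\)" is equivalent to "\(K_C\) convex on \((0,1)\)".

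For the second derivative, write \(u=C^{-1}(x)\), so that \(du/dx=1/C'(u)\). Then
\[
K_C'(x)=-\frac{1}{u^2C'(u)},
\qquad
K_C''(x)=\frac{2uC'(u)+u^2C''(u)}{u^4C'(u)^3},
\]
whose sign is the sign of \(2C'(u)+uC''(u)\). This gives the first equivalence. As a check, the benchmark \(C(\mu)=\mu^{1/\gamma}\) yields \(2C'+\mu C''=\tfrac1\gamma(1+\tfrac1\gamma)\mu^{1/\gamma-1}>0\).

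\emph{Sufficiency.} I would repeat the proof of Theorem \ref{thm:convex-order} with \(J_{F,C}\) in place of \(J_F\). Continuity of \(J_{F,C}\) follows from continuity of \(C^{-1}\) and dominated convergence, using \(V-g\geq r-\overline g>0\) to keep \(K_C\) bounded. Monotonicity of \(J_{F,C}\) holds because \(\kappa_{V,C}(g)\) is non-increasing in \(V\). Together these give the unique crossing in \eqref{eq:unified-root}, which the paper notes needs only monotonicity of \(C\). Convexity of \(\kappa_{V,C}\) then gives \(J_{F_2,C}(V_1)\geq J_{F_1,C}(V_1)\) at the original root \(V_1\). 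Since the left side minus the right side of \eqref{eq:unified-root} is decreasing and crosses zero once, \(V_2\geq V_1\) in every regime, exactly as before.

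\emph{Necessity.} Suppose \(2C'+\mu C''<0\) at \(\mu_0\). By continuity of \(C''\) the inequality holds on a neighbourhood, so \(K_C\) is strictly concave on an interval around \(x_0=C(\mu_0)\). The plan is to choose \(p,\gamma,\beta,F\), all satisfying Assumption \ref{ass:solvency}, so that two things hold: the equilibrium is in the interior regime, and some interior need \(g_0\in(\underline g,\overline g)\) has equilibrium posterior \(\mu_0\), i.e.\ \((V-g_0)/\beta=x_0\).

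\begin{itemize}
\item Since the posterior falls continuously from \(1\) at the cutoff to \(C^{-1}((V-\overline g)/\beta)\) at the top of the support, by the intermediate value theorem it suffices to make \((V-\overline g)/\beta<x_0\) while keeping \(V-\beta<\overline g\).
\item I would choose \(F\) with a thick upper tail near a small \(\overline g<r\), and \(\beta\) close to \(1\), so that \(V\in(r,1)\) satisfies \(V-\overline g<\beta x_0\).
\item I would then tune \(p\) to keep \(p[1+J_{F,C}(1)]<1\), so that the equilibrium stays in the interior regime.
\end{itemize}

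Given such parameters, take a small smooth mean-preserving spread \(F_2\) of \(F_1\) supported in the strict-concavity window around \(g_0\). A smooth perturbation keeps the density continuous and positive, and a small one preserves solvency and the regime. By strict Jensen, \(J_{F_2,C}(V_1)<J_{F_1,C}(V_1)\). In the interior regime the right side \(V^\gamma\) is strictly increasing, so the crossing moves strictly down and \(V_2<V_1\).

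The main obstacle is this parameter construction. \(V\) is itself endogenous to \((p,\beta,F)\), so placing \(\mu_0\) strictly inside the active range while staying interior and solvent needs a careful continuity or limiting argument. I would try to obtain it by letting \(p\) move so that \(V\) sweeps through \((r,1)\) continuously, and then picking the parameter value at which the target posterior is attained.
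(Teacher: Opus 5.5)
Your computation of \(K_C''\) (it simplifies to the paper's \((2C'+\mu C'')/(\mu^3C'^3)\)) is correct and follows the paper's proof. So are your treatment of the kink (continuity at \(x=1\) plus monotonicity) and your sufficiency argument through the unique crossing.

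The gap is in the converse, at the step you yourself call the main obstacle. The claim is existential, so its whole content is to exhibit \((p,\gamma,\beta,F)\) giving an interior, strictly solvent equilibrium in which \(\mu_0\) is the posterior of an interior need. The proposal does not exhibit such parameters, and the sketch does not work as written:
\begin{enumerate}[label=(\roman*)]
\item Your intermediate-value reduction needs the two-sided bound \(\underline g+\beta x_0<V<\overline g+\beta x_0\), not only the upper one. With \(\beta\) near one and small needs, the cutoff \(V-\beta\) lies below \(\underline g\). The posterior on the support then does not start at one, and it may already be below \(\mu_0\) at \(\underline g\).
\item A thicker upper tail raises \(J_{F,C}\) and hence the root \(V\), which works against \(V-\overline g<\beta x_0\).
\item Moving \(p\) to sweep \(V\) through \((r,1)\) also moves \(r=p^{1/\gamma}\), so the solvency constraint and the target shift together.
\end{enumerate}

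The missing idea is to reverse-engineer the prior from the target value instead of chasing the endogenous \(V\). The paper proceeds as follows:
\begin{itemize}
\item Fix \(V\in(0,1)\) and set \(p=\mu_0V^\gamma\), so \(r=\mu_0^{1/\gamma}V\).
\item Take \(\beta\in((V-r)/C(\mu_0),V/C(\mu_0))\) with \(\beta\le1\), which is possible for small \(V\).
\item Place a point mass at \(g_0=V-\beta C(\mu_0)\in(0,r)\).
\end{itemize}
Then \(J_{F,C}(V)=K_C(C(\mu_0))=1/\mu_0-1\), so \(p[1+J_{F,C}(V)]=p/\mu_0=V^\gamma\). Thus \(V\) is exactly the interior root, and \(g_0\) is active with posterior \(\mu_0\). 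Next, smooth the point mass into a positive continuous density on a small interval around \(g_0\). Continuity of the unique crossing keeps the equilibrium value near \(V\), with the kernel strictly concave near the central need, and your local Jensen step then finishes the argument.

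If you prefer to keep a fixed positive density, a bracketing argument also closes the gap. Since \(K_C(x)\le1/\mu_0-1\) for \(x\ge x_0\) and \(K_C(x)>1/\mu_0-1\) for \(0<x<x_0\), suppose \(\overline g+\beta x_0<1\) and
\[
\max\{\overline g^\gamma,\mu_0(\underline g+\beta x_0)^\gamma\}<p<\mu_0(\overline g+\beta x_0)^\gamma .
\]
Any such \(p\) forces the unique root into \((\underline g+\beta x_0,\overline g+\beta x_0)\) while keeping solvency and the interior regime. Such a \(p\) exists once \(\overline g\) is small relative to \(\beta x_0\).
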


The necessity statement is about the scope of a general theorem. Failure of \eqref{eq:C-condition} does not imply that every spread lowers rents. It means that one can place an active equilibrium in a region of concavity and find a local spread that does so. The proof constructs this comparison with positive continuous densities and strict global solvency.

The benchmark \(C(\mu)=\mu^{1/\gamma}\) satisfies the condition strictly for every \(\gamma>0\). A continuation economy with proportional taxation can instead have a fiscal-capacity threshold below which no positive financing base exists. Near that threshold, the relevant kernel can be concave. Appendix \ref{app:rates} develops this separate economy and gives explicit reversals. Because its continuation technology is not defined as a normalized positive financing branch over every posterior, the appendix is an illustration of the boundary, rather than a direct application of the theorem without further qualifications.

\section{Verification and citizen welfare}
\label{sec:audit}

\subsection{Global equilibrium effects}

Suppose citizens can verify the fiscal need and the use of funds after an ordinary levy has been collected. The audit takes place with a common probability \(\alpha\in[0,1]\), independently of the levy and the government type. An audited honest government is certified, while an audited opportunist is exposed. Conditional on the levy, the absence of an audit conveys no further information.

Verification changes terminal beliefs. It does not recover current revenue or impose a fine, and citizens anticipate its probability when they choose labor. For an opportunist that mimics, the continuation prize is received only when the levy escapes verification. Thus, for \(\alpha<1\), the equilibrium is obtained by replacing \(\beta\) in the government's incentive condition with

\begin{equation}
 \widetilde\beta=\beta(1-\alpha).
 \label{eq:effective-beta}
\end{equation}
The citizens' discount factor remains \(\beta\). This distinction will matter for welfare: \(\widetilde\beta\) measures the opportunist's effective reward from concealment, not society's valuation of the future.

Under perfect verification, no ordinary levy can preserve the opportunist's reputation. It then prefers confiscation to any ordinary levy below \(L\). The resulting base is \(r\), which still finances every honest need by Assumption \ref{ass:solvency}. Write \(V(\alpha)\), \(m(\alpha)\), and \(L(\alpha)\) for the equilibrium value, mimicry probability, and labor.

\begin{proposition}
\label{prop:audit-path} Suppose the unaudited economy is active, so \(r<\overline g+\beta\), and define
\begin{equation}
 \alpha_D=1-\frac{r-\overline g}{\beta}\in(0,1).
 \label{eq:audit-deterrence}
\end{equation}
Then:
\begin{enumerate}[label=(\roman*),leftmargin=2.3em]
\item \(V(\alpha)\) is continuous and non-increasing, and strictly decreasing while mimicry is active.  The mimicry probability and labor are weakly decreasing, while outright confiscation is weakly increasing.
\item For \(\alpha\geq\alpha_D\), the economy is in betrayal only: \(V=L=r\) and \(m=0\).
\item If the unaudited economy has full mimicry, there is a unique \(\alpha_F\in[0,\alpha_D)\) such that \(V(\alpha_F)=1\).  Full mimicry holds below \(\alpha_F\), interior mimicry and betrayal hold on \((\alpha_F,\alpha_D)\), and betrayal only holds above \(\alpha_D\).  If the unaudited economy is already interior, the first region is absent.
\end{enumerate} Both transitions are continuous.  Current concealed revenue need not be monotone in \(\alpha\).
\end{proposition}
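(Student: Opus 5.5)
The plan is to treat verification as nothing more than a change in the opportunist's discount factor, from \(\beta\) to \(\widetilde\beta=\beta(1-\alpha)\). Theorem \ref{thm:classification} then applies verbatim for each \(\alpha<1\). Write \(\psi_V^{\widetilde\beta}\) and \(J_F(V;\widetilde\beta)\) for the kernel and functional with \(\widetilde\beta\) in place of \(\beta\). Citizens' labor condition \eqref{eq:labor} and the base \(r\) do not involve \(\beta\), so \(V(\alpha)\) is the unique root of
\[
p\bigl[1+J_F(V;\widetilde\beta)\bigr]=\min\{V^\gamma,1\}.
\]
The comparative statics in \(\alpha\) thus reduce to comparative statics of this crossing in \(\widetilde\beta\), and \(\widetilde\beta\) is continuous and strictly decreasing in \(\alpha\).

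\textbf{Monotonicity of the kernel in \(\widetilde\beta\).} For fixed \(V\) and \(g\), the map \(\widetilde\beta\mapsto[(\widetilde\beta/(V-g))^\gamma-1]_+\) is continuous and non-decreasing. It is strictly increasing wherever it is positive, that is, where \(g>V-\widetilde\beta\). Integrating against \(F\), whose density is positive, \(J_F(V;\widetilde\beta)\) is continuous and non-decreasing in \(\widetilde\beta\). It is strictly increasing whenever \(V<\overline g+\widetilde\beta\), which is exactly when mimicry is active by Lemma \ref{lem:J}. Continuity in \((V,\widetilde\beta)\) jointly follows from dominated convergence, since the integrand is bounded once \(V-g\ge r-\overline g>0\).

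\textbf{Part (i).} I would reuse the crossing argument from Theorem \ref{thm:convex-order}. The left side of the root equation is non-increasing in \(V\) and the right side is strictly increasing up to one. Raising \(\alpha\) lowers the left side pointwise, weakly and strictly on the active range. Hence the crossing moves weakly left, and strictly while \(m>0\).

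The main obstacle is the full-mimicry regime, where the right side is flat at one and the root is pinned down by \(J_F(V;\widetilde\beta)=(1-p)/p\). Here strictness must come from the strict decrease of \(J_F\) in \(V\) on \((\overline g,\overline g+\widetilde\beta)\) from Lemma \ref{lem:J}, combined with its strict increase in \(\widetilde\beta\). Continuity of \(V(\alpha)\) follows from uniqueness of the root together with joint continuity, via a standard compactness argument on \([r,\overline g+\beta]\).

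The behavioral outcomes then follow from the regime formulas. In the interior regime \(m=(V^\gamma-p)/(1-p)\) and \(L=V\), and in full mimicry \(m=L=1\). Since \(e=1-m\), all three are monotone functions of \(V\).

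\textbf{Parts (ii) and (iii).} Applying \eqref{eq:betrayal-condition} with \(\widetilde\beta\), the economy is in betrayal only iff \(r\ge\overline g+\beta(1-\alpha)\), which is equivalent to \(\alpha\ge\alpha_D\). Since \(r<\overline g+\beta\) and \(r>\overline g\), we get \(\alpha_D\in(0,1)\). At \(\alpha=1\) the text already gives \(V=r\), so the conclusion extends to \([\alpha_D,1]\).

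For (iii), the function \(\alpha\mapsto V(\alpha)\) is continuous and non-increasing. It is strictly decreasing on \([0,\alpha_D)\), where mimicry is active, and \(V(\alpha_D)=r<1\). If \(V(0)\ge1\), the intermediate value theorem together with strict monotonicity gives a unique \(\alpha_F\in[0,\alpha_D)\) with \(V(\alpha_F)=1\). The regimes then follow from the classification: full mimicry iff \(V\ge1\), interior iff \(r<V<1\), betrayal iff \(V=r\). If \(V(0)<1\), there is no full-mimicry region.

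Continuity of the transitions is inherited from continuity of \(V(\alpha)\) and of the formulas for \(m\), \(e\), and \(L\) in \(V\), which agree at the boundaries. For the last sentence, a numerical example suffices. Concealed revenue \(m\,\E_M[g]\) rises in full mimicry, because a lower \(V\) shifts \(h_V\) toward larger levies while \(m\) stays at one. It then falls to zero at \(\alpha_D\), so it is not monotone.
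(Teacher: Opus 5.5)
Your proposal is correct and follows the paper's proof. You substitute \(\widetilde\beta=\beta(1-\alpha)\), use that the kernel and \(J_F\) rise in \(\widetilde\beta\) (strictly on the active range), apply the unique crossing of \eqref{eq:unified-root}, solve \(r\geq\overline g+\widetilde\beta\) for \(\alpha_D\), and locate \(\alpha_F\) by continuity and strict monotonicity. The paper phrases that last step as \(p[1+J_{F,\widetilde\beta}(1)]\) falling through one, which is equivalent to your \(V(\alpha_F)=1\).

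The one soft spot is your reason why concealed revenue rises under full mimicry. There \(\widetilde\beta\) falls together with \(V\), so the sign cannot be read off from ``lower \(V\)'' alone. It follows from the Cauchy--Schwarz computation \(\dd D/\dd q\leq0\) in the proof of Proposition~\ref{prop:audit-welfare}, or you can exhibit the paper's policy example with \(\alpha_F>0\).
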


The threshold \(\alpha_D\) is the point at which even the largest honest levy, accompanied by the best possible unaudited reputation, is no better than confiscation. At and beyond that point, an ordinary levy already identifies the honest type. Additional audits provide no information that actions do not reveal.

Before that threshold, the adjustment depends on the initial regime. In full mimicry, the probability of an ordinary levy remains one, so labor remains at one. The opportunist instead changes the distribution of its levies to compensate for a smaller future reward. The welfare calculation below shows that expected current diversion rises on this branch. Once the economy enters partial mimicry, further verification also raises confiscation and reduces labor. A fall in lifetime value therefore need not take the form of less extraction today.

\subsection{Welfare accounting}

To evaluate auditing, we need to account for both periods and for the change in behavior. For \(\alpha<1\), let \(\psi_{V,\alpha}\) be the kernel in \eqref{eq:rn-density} with \(\widetilde\beta\) in place of \(\beta\). Two averages will be useful:

\begin{align}
 D(V,\alpha)&=\E\left[G\psi_{V,\alpha}(G)\right],
 \label{eq:diverted-levy}\\
 Q(V,\alpha)&=\E\left[
 \min\left\{1,\frac{V-G}{\widetilde\beta}\right\}
 \right].
 \label{eq:Q}
\end{align}
The ex-ante amount diverted through ordinary levies is \(pD\). To see the weighting, the opportunist's conditional density is \(pf\psi/(1-p)\), and its ex-ante probability is \(1-p\). The second average, \(Q\), is expected terminal labor following an unaudited honest levy. In betrayal only, an ordinary levy reveals honesty; we set \(D=0\) and \(Q=1\), including at \(\alpha=1\).

Let \({\cal C}(\alpha)\) be the real resource cost of maintaining audit capacity, with \({\cal C}(0)=0\). Citizens bear this cost through a separable lump-sum charge, so it does not alter labor incentives or the feasibility of legitimate spending.\footnote{This is an ex-ante capacity cost, rather than a fee incurred only when a particular levy is investigated. A cost of the latter kind would also depend on the equilibrium probability of an ordinary levy.}

Using the normalization \(B(G)=G\), citizen welfare is

\begin{equation}
 \boxed{
 W(\alpha)=
 \frac{\gamma}{1+\gamma}L(\alpha)^{1+\gamma}
 -pD(V(\alpha),\alpha)
 +\frac{\beta p\gamma}{1+\gamma}
 \left[\alpha+(1-\alpha)Q(V(\alpha),\alpha)\right]
 -{\cal C}(\alpha).}
 \label{eq:welfare}
\end{equation}
The first term is current output retained by citizens, net of labor costs and with honest spending credited at its benefit. The second subtracts ordinary levies kept by the opportunist. Confiscation is already reflected in the probability of retaining output and in equilibrium labor, so it must not be subtracted again.

The third term is discounted terminal surplus. An audited honest government induces the full-information labor choice, while an exposed opportunist induces no labor. Following an unaudited levy, labor is instead based on the posterior associated with that levy. This information effect can be made particularly transparent by writing \(q=\widetilde\beta=\beta(1-\alpha)\) and expressing the terminal component as

\begin{equation}
 W_2(\alpha)=\frac{\beta p\gamma}{1+\gamma}
 -\frac{p\gamma}{1+\gamma}
 \E\left[(G+q-V(q))_+\right].
 \label{eq:terminal-welfare-stoploss}
\end{equation}
Here \(V(q)\) is the same equilibrium value parameterized by the effective continuation weight. The positive-part term measures the loss relative to full information. On every active branch \(V_q<1\). Reducing \(q\) by increasing auditing therefore reduces that loss and raises terminal citizen welfare. This improvement must be compared with the current behavioral response.

\begin{proposition}
\label{prop:audit-welfare} Consider uniform audits under the benchmark continuation technology and suppose the unaudited economy is active.
\begin{enumerate}[label=(\roman*),leftmargin=2.3em]
\item Gross current citizen welfare is weakly decreasing in \(\alpha\) in every regime. It falls strictly in the interior regime and, in full mimicry, whenever the active need distribution is non-degenerate.  It is constant in betrayal only.
\item Gross intertemporal citizen welfare is continuous on \([0,1]\).  If \({\cal C}\) is continuous, net citizen welfare is continuous and an optimal audit exists.  If, additionally, \({\cal C}\) is non-decreasing, some optimum can be selected with \(\alpha^*\leq\alpha_D\); if it is strictly increasing, no optimum exceeds \(\alpha_D\).
\item Even when \({\cal C}\equiv0\), the sign of the intertemporal citizen-welfare effect is not fixed.  The terminal information gain in \eqref{eq:welfare} can outweigh the current loss, but need not do so.
\end{enumerate}
\end{proposition}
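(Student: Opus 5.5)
The plan is to work with the current component of \eqref{eq:welfare} and parameterize it by the effective weight \(q=\widetilde\beta=\beta(1-\alpha)\), proving that it is weakly increasing in \(q\). Call this component
\[
W_1=\tfrac{\gamma}{1+\gamma}L^{1+\gamma}-pD(V,q).
\]
By Proposition \ref{prop:audit-path}, \(V(q)\) is continuous and non-decreasing in \(q\), and the regime boundaries \(\alpha_F,\alpha_D\) are known.

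In betrayal only, \(L=r\) and \(D=0\), so \(W_1\) is constant. In each active regime, write \(x(G)=q/(V-G)\), so that \(\psi=[x^\gamma-1]_+\). Differentiating along the equilibrium path (where \(V'=dV/dq\)) gives
\[
w(G)\equiv\frac{d\psi}{dq}=\gamma x^\gamma\Big[\frac1q-\frac{V'}{V-G}\Big]
\]
on the active set \(\{G>V-q\}\), and \(w=0\) elsewhere. Crossing the kink of the positive part contributes nothing, because \(\psi\) vanishes there continuously. The structural fact to exploit is that, since \(V'\ge0\) and \(V-G>0\), the bracket is decreasing in \(G\). Hence \(w\) changes sign at most once on the active set, from positive to negative. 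I would justify differentiability of \(V(q)\) by the implicit function theorem applied to \eqref{eq:unified-root}, using the strict monotonicity of \(J\) from Lemma \ref{lem:J}.

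\emph{Full mimicry.} Here \(L=1\), and \eqref{eq:full-root} fixes \(\E\psi=(1-p)/p\), so \(\E w=0\). The derivative of current welfare is \(-p\,\E[Gw]\). Because \(w\) has zero mean and crosses from \(+\) to \(-\) at some point \(G^*\), we have
\[
\E[Gw]=\E[(G-G^*)w]\le0,
\]
with strict inequality unless the active need distribution is degenerate. Hence \(D\) falls in \(q\) (it rises with auditing), and \(W_1\) is weakly increasing in \(q\), strictly under non-degeneracy.

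\emph{Interior regime.} Here \(L=V\) and \(V^\gamma=p(1+\E\psi)\). Differentiating this identity gives \(\gamma V^{\gamma-1}V'=p\,\E w\). Therefore
\[
\frac{dW_1}{dq}=\gamma V^\gamma V'-p\,\E[Gw]=p\,\E[(V-G)w].
\]
The function \(V-G\) is positive and decreasing, while \(w\) single-crosses from \(+\) to \(-\) at \(G^*\). The standard sign-crossing inequality then gives
\[
\E[(V-G)w]\ge (V-G^*)\,\E w=\frac{\gamma V^{\gamma-1}V'}{p}\,(V-G^*)\ge0.
\]
The inequality is strict because \(V'>0\) in the interior regime by Proposition \ref{prop:audit-path}(i). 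Continuity of \(W_1\) across the transitions, which follows from the continuity of \(V\), \(L\), \(D\) in Proposition \ref{prop:audit-path}, glues the regimes into global weak monotonicity. That establishes (i).

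For (ii), I would check continuity of each term of \eqref{eq:welfare} in \(\alpha\). Continuity of \(V\) and of the maps \((V,\alpha)\mapsto D,Q\) (by dominated convergence on a compact support) handles the active range. At \(\alpha_D\), \(D\to0\) and \(Q\to1\), matching the convention for betrayal only, and the \(\alpha\to1\) endpoint is handled the same way. Existence of an optimum then follows from Weierstrass on \([0,1]\). Beyond \(\alpha_D\), \(L\), \(D\) and \(Q\) are constant, and the terminal term equals its full-information value. Gross welfare is therefore constant there, so a non-decreasing \({\cal C}\) lets us select \(\alpha^*\le\alpha_D\), and a strictly increasing \({\cal C}\) excludes optima above it. For (iii), it suffices to exhibit two parameterizations, using \eqref{eq:terminal-welfare-stoploss} to compute the terminal gain and part (i) to compute the current loss. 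The examples of Section \ref{sec:numerics}, one with \(\alpha^*=0\) and one with an interior optimum, serve this purpose.

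I expect the main obstacle to be the interior-regime sign in (i). The plan there is to reduce everything to \(\E[(V-G)w]\) via the equilibrium identity, and if differentiability fails at isolated points, to argue with one-sided difference quotients instead.
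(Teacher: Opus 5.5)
Your proof is correct. Its skeleton matches the paper's: parameterize by \(q=\beta(1-\alpha)\), differentiate current welfare along the equilibrium path in each active regime, substitute the differentiated equilibrium condition, sign the result, and glue the regimes by continuity. Parts (ii) and (iii) are handled essentially as in the paper. The difference is the signing device. The paper solves for \(V_q\) explicitly in terms of the moments \(A_j=\int (V-g)^j(q/(V-g))^\gamma\,dF\), obtains closed forms such as \(dD/dq=(\gamma/q)(A_0^2/A_{-1}-A_1)\), and applies Cauchy--Schwarz. You keep \(V'\) implicit. You use only that \(w\) has the sign of \(G^*-G\), with \(G^*=V-qV'\), together with the equilibrium restriction on \(\E w\). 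Pointwise, \((G-G^*)w=-\gamma x^\gamma(G-G^*)^2/[q(V-G)]\), so your single-crossing bound is a completed-square form of the same inequality, and the two routes give identical derivatives. Your route buys generality: it uses only that the kernel is a decreasing function of \((V-G)/q\). The current-welfare part of (i) would therefore extend to a general smooth increasing continuation prize \(C\), whereas the \(A_j\) algebra is tied to the power kernel. The paper's route buys the explicit bound \(V_q<1\) (from \(qA_{-1}>A_0\)). It needs that bound to show the terminal component \eqref{eq:terminal-welfare-stoploss} rises with auditing, which is what lets (iii) speak of a terminal gain set against a current loss. You omit this step, but it is immediate in your framework. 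On the active set \(V-G<q\), so \(V'\ge1\) would force \(w<0\) there and hence \(\E w<0\), contradicting \(\E w\ge0\). Finally, obtain \(V'>0\) in the interior regime from the implicit-function formula (\(\partial_q J>0\), \(\partial_V J<0\)), not from the strict monotonicity in Proposition \ref{prop:audit-path}, which alone does not rule out a zero derivative. Your strictness conclusion survives either way, since \(V'=0\) would make \(w>0\) on the entire active set.
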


The proposition separates a current loss from a future gain. While the opportunist still conceals its type, the prospect of preserving a future base restrains what it takes immediately. Verification weakens that restraint. Citizens gain later because they can condition production on better information about the government. There is no general reason for the second effect to dominate the first.

If the unaudited economy is already in betrayal, both actions reveal type and gross welfare is constant in audit intensity. A non-decreasing capacity cost then makes zero auditing optimal, though it need not be the only optimum if the cost is flat. More generally, auditing beyond \(\alpha_D\) has no allocation or information benefit in this model. The relevant policy choice lies between no verification and the point where actions themselves fully separate the types.

\section{Numerical illustrations}
\label{sec:numerics}

Two examples help distinguish the changes in rents from the welfare case for verification. Both use uniform needs and \(\gamma=1\), for which the mimicry functional can be integrated in closed form. The calculations locate equilibrium and regime boundaries continuously; the grid is used only to display the resulting curves.\footnote{The parameter values illustrate the mechanisms rather than estimate their size in a particular country. The replication files include the closed-form solver and a separate validator that reconstructs the examples by numerical integration.}

For the risk comparison, set \(p=\beta=0.5\). Compare \(G\sim U[0.15,0.35]\) with \(G\sim U[0.05,0.45]\). Expected spending is \(0.25\) in both economies, and the second distribution is a mean-preserving spread of the first. Both satisfy global solvency. Without audits, both equilibria involve partial mimicry, and the spread changes value and the mimicry probability as follows:

\begin{equation}
 V: 0.64573\longrightarrow0.66736,
 \qquad
 m:0.29146\longrightarrow0.33472.
 \label{eq:numerical-risk}
\end{equation}
The conditional probability of confiscation falls from \(0.70854\) to \(0.66528\). The larger value therefore coincides with a larger current base. The left panel of Figure \ref{fig:audit} shows how auditing removes the additional cover. In the wider-need economy, even costless verification lowers intertemporal citizen welfare over the range where it changes behavior. Zero auditing is optimal.

For the policy example, set \(p=0.7\), \(\beta=1\), and \(G\sim U[0.4,0.6]\). The unaudited economy has full mimicry and \(V=1.20476\). With no audit cost, welfare is maximized at \(\alpha_F=0.29534\), where full mimicry gives way to partial mimicry. A cost of \({\cal C}(\alpha)=0.1\alpha^2\) leaves the optimum at that boundary. With \({\cal C}(\alpha)=0.2\alpha^2\), the optimum moves to \(\alpha^*=0.24888\), strictly within the full-mimicry regime. The right panel displays all three cases.

The positive audit optima do not eliminate concealment. At the lower optimum, the opportunist still uses an ordinary levy with certainty. At the regime boundary, confiscation is just becoming competitive. The welfare gain comes from better terminal information, despite the increase in current diversion.

\begin{figure}[t]
\centering
\includegraphics[width=0.96\textwidth]{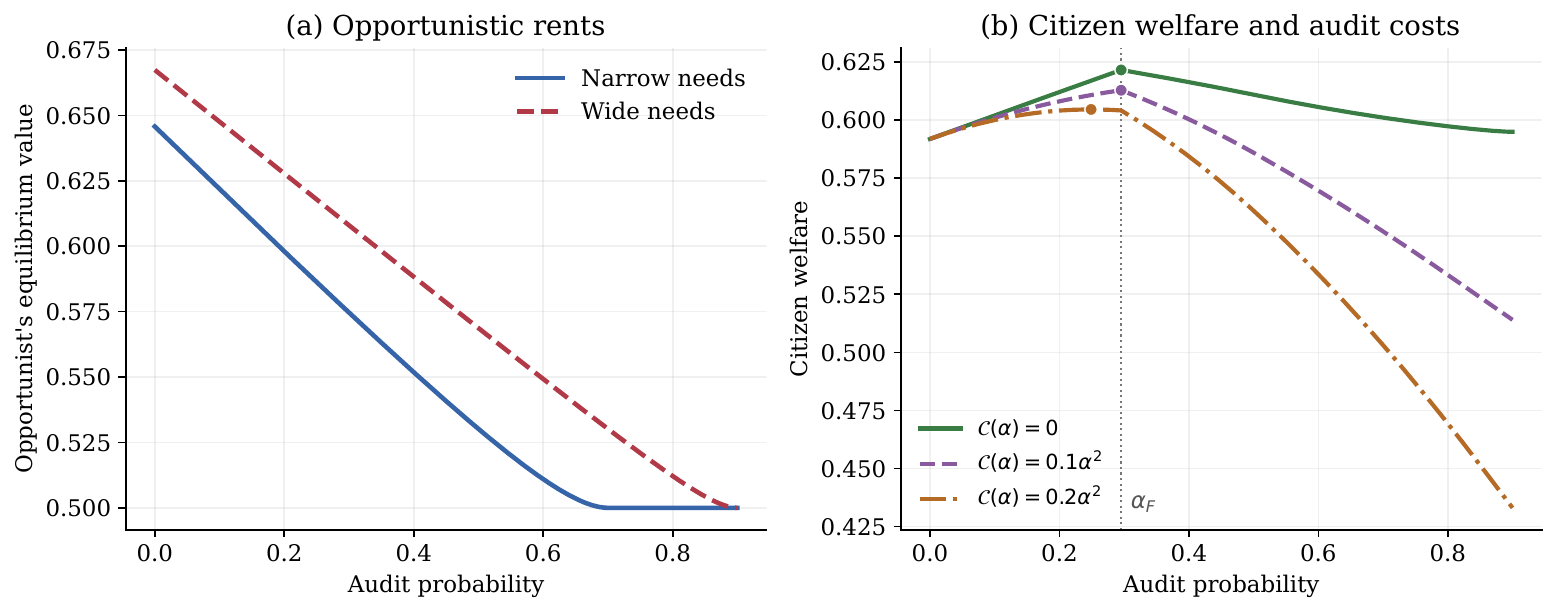}
\caption{Spending risk and verification. Left: opportunistic value for the narrow and wide uniform distributions, with \(p=\beta=0.5\). Right: citizen welfare for \(p=0.7\), \(\beta=1\), and \(G\sim U[0.4,0.6]\), under three audit-cost functions. Dots mark the continuously located optima; the vertical line marks the full-to-partial-mimicry boundary \(\alpha_F\).}

\label{fig:audit}
\end{figure}

\section{Conclusion}
\label{sec:conclusion}

The spending needs of an honest government affect the opportunities available to an opportunist. When citizens see the levy but cannot verify its justification, a government with nothing to finance can imitate the demands of a government facing a large expense. The distribution of those expenses then becomes a determinant of reputational rents.

The model identifies the relevant distributional comparison. A mean-preserving spread weakly increases opportunistic value because the likelihood required to support mimicry is convex in the legitimate need. The increase is strict only when the spread changes the upper tail used for mimicry. Risk among needs that are too small to be imitated has no effect. The general continuation-prize result states the curvature condition behind this argument and shows how a local reversal can arise when it fails.

Verification reduces the future reward from concealment. Its current consequences are less favorable for citizens: the opportunist demands more through ordinary levies or becomes more willing to confiscate, and current welfare weakly falls. Better information improves the subsequent production decision. Whether auditing is worthwhile depends on that future gain relative to the current loss and the resource cost of verification.

These conclusions rely on the stated fiscal instruments and on a continuation economy in which reputation determines a future seizure opportunity. The proportional-tax extension demonstrates that changing the marginal tax incentive can change the distributional and audit responses. Recurrent spending needs, fiscal caps, or stabilization insurance would require further choices about financing, default, and political replacement. The present analysis leaves those choices open while identifying the reputational effect that such models would have to take into account.

\appendix

\section{Proofs}
\label{app:proofs}

\begin{proof}[Proof of Lemma \ref{lem:J}] We first establish the shape of the likelihood kernel, and then integrate it. In the active range \(g>V-\beta\), its derivatives are
\begin{equation}
 \frac{\partial\psi_V(g)}{\partial g}
 =\gamma\beta^\gamma(V-g)^{-\gamma-1}>0,
 \qquad
 \frac{\partial^2\psi_V(g)}{\partial g^2}
 =\gamma(\gamma+1)\beta^\gamma(V-g)^{-\gamma-2}>0.
 \label{eq:A-kernel-derivatives}
\end{equation}

In the frugal range the kernel is zero. The two pieces agree at the cutoff, where the slope increases from zero on the left to \(\gamma/\beta\) on the right. The joined function is therefore non-decreasing and convex, with strict convexity on the active range.

For each need, the kernel is non-increasing in \(V\). On a compact subset of \(V>\overline g\), the denominator stays bounded away from zero, so dominated convergence gives continuity of its integral. If \(V<\overline g+\beta\), there is positive probability above \(V-\beta\), where the decrease in \(V\) is strict. The integral is consequently strictly decreasing there. If \(V\geq\overline g+\beta\), no need is active and the integral is zero.
\end{proof}

\begin{proof}[Proof of Theorem \ref{thm:classification}] The proof has three steps. We first identify the form of any regular equilibrium strategy. We then solve the consistency condition for its value. Finally, we verify that the resulting strategy leaves no profitable deviation.

Under the belief restriction, a singular ordinary action receives posterior zero. If it is in the honest support, its payoff is at most \(\overline g<r\leq L\); if it is outside that support, its payoff is \(T<L\). Confiscation strictly dominates in either case. Ordinary actions outside the honest support are dominated for the same reason. Thus the equilibrium measure satisfies \(M\ll F\). Since confiscation is always available,
\begin{equation}
 V\geq L\geq r>\overline g.
 \label{eq:A-value-order}
\end{equation}

At common Lebesgue points of the densities, the local-likelihood rule gives \eqref{eq:bayes-rn}. Positive mimicry density on \(g\leq V-\beta\) would give a posterior below one and a payoff strictly below \(V\). Hence the density is zero almost everywhere on that range. Conversely, a positive-\(F\)-measure set with zero mimicry density above \(V-\beta\) would contain a common Lebesgue point with posterior one. At that point, the opportunist could obtain \(g+\beta>V\). Such a hole in the active range cannot occur in equilibrium. Indifference therefore implies \(h=h_V\) almost everywhere, and
\begin{equation}
 m=\frac{p}{1-p}J_F(V),
 \qquad
 L^\gamma=p+(1-p)m=p[1+J_F(V)].
 \label{eq:A-mass-labor}
\end{equation}

If confiscation is used, \(m<1\) implies \(L<1\), and indifference with confiscation gives \(V=L\). This yields \eqref{eq:interior-root}. If confiscation is unused, the ordinary probabilities must add up to one, giving \(m=L=1\) and \eqref{eq:full-root}. The value must then satisfy \(V\geq1\) to deter confiscation. Together these conditions give \eqref{eq:unified-root}.

Suppose first that \(r\geq\overline g+\beta\). Even the largest honest levy with the best possible reputation pays no more than confiscation at the base \(r\). Betrayal only is therefore an equilibrium, with the outcome in \eqref{eq:betrayal-outcome}. No active equilibrium can have \(V\geq r\) in this case, since its active set would be empty.

Now suppose \(r<\overline g+\beta\). Define the difference between the two sides of the equilibrium equation on \([r,\overline g+\beta]\):
\begin{equation}
 D(V)=p[1+J_F(V)]-\min\{V^\gamma,1\}.
 \label{eq:A-D}
\end{equation}

At \(r\), the active set has positive probability and \(D(r)=pJ_F(r)>0\). At the upper endpoint \(J_F=0\), while \(\overline g+\beta>r\), so \(D<0\). Lemma \ref{lem:J} makes this difference strictly decreasing through its crossing. There is consequently one root. Its position below, at, or above one is determined by \(p[1+J_F(1)]-1\). This gives the two active regimes and the probabilities stated in the theorem.

To verify the candidate, use the continuous density \(h_Vf\) and its local-likelihood posterior. Every mimicked levy yields \(V\). Every unmimicked honest levy has \(g\leq V-\beta\) and hence yields at most \(V\), even at posterior one. An off-support ordinary levy gives posterior zero and pays less than \(L\leq V\); confiscation pays \(L\leq V\). All deviation constraints are satisfied. Honest levies are feasible because \(L\geq r>\overline g\). At the betrayal boundary \(J_F(r)=0\), and at the full-mimicry boundary \(V=1\). Substitution into the probability and labor formulas establishes continuity at both transitions.
\end{proof}

\begin{proof}[Proof of Theorem \ref{thm:convex-order}] Global solvency ensures that each equilibrium value exceeds both support bounds. On this common domain, convexity of the kernel and the definition of convex order imply
\begin{equation}
 J_{F_2}(V)=\E_{F_2}\psi_V(G)
 \geq\E_{F_1}\psi_V(G)=J_{F_1}(V).
 \label{eq:A-J-order}
\end{equation}

Evaluate this inequality at \(V(F_1)\). The difference between the left- and right-hand sides of the equilibrium equation for \(F_2\) is non-negative there. Since that difference crosses zero only once and decreases through the crossing, its root is weakly larger. It is strictly larger precisely when the inequality is strict at \(V(F_1)\). This establishes both the ordering and its exact strictness condition, including comparisons across regimes. The labor and probability comparisons follow from \eqref{eq:interior-outcome} and the saturation \(m=L=1\) under full mimicry.

For the tail representation, take \(V>\overline g^\vee\) and \(c=V-\beta<\overline g^\vee\). The slope jump at \(c\), together with the second derivative on the active range, gives
\begin{equation}
 \psi_V(g)=\frac{\gamma}{\beta}(g-c)_+
 +\int_c^{\overline g^{\vee}}
 \frac{\gamma(\gamma+1)\beta^\gamma}{(V-t)^{\gamma+2}}
 (g-t)_+\dd t,
 \qquad c=V-\beta,
 \label{eq:A-convex-representation}
\end{equation}

Taking expectations under each distribution and subtracting yields \eqref{eq:stop-loss}. If \(c\geq\overline g^\vee\), both kernels vanish on the relevant supports, so their difference is zero.
\end{proof}

\begin{proof}[Proof of Theorem \ref{thm:curvature}] We first derive the curvature condition and check that the active kernel joins its flat part correctly. Write \(x=C(\mu)\). Differentiating the inverse in \(K_C(x)=1/\mu-1\) gives
\begin{equation}
 K_C''(x)=
 \frac{2C'(\mu)+\mu C''(\mu)}
 {\mu^3[C'(\mu)]^3}.
 \label{eq:A-K-curvature}
\end{equation}

The denominator is positive. Since \(x=(V-g)/\beta\) is affine in \(g\), the numerator also determines curvature with respect to the need. To check the join at \(g=V-\beta\), let \(A(\mu)=\mu^2C'(\mu)\). Then
\begin{equation}
 A'(\mu)=\mu[2C'(\mu)+\mu C''(\mu)],
 \qquad
 \frac{\partial K_C((V-g)/\beta)}{\partial g}
 =\frac{1}{\beta A(\mu)}.
 \label{eq:A-general-slope}
\end{equation}

Under \eqref{eq:C-condition}, \(A\) is non-decreasing in \(\mu\). As the need rises, its supporting posterior falls, so the slope of the active kernel rises. At the cutoff, its one-sided limit is non-negative and therefore no smaller than the zero slope of the frugal range. The value is continuous there. The condition is thus sufficient for convexity of the entire kernel, not only its active piece. Necessity follows from \eqref{eq:A-K-curvature} at any interior posterior where the condition fails. Under convexity, the crossing argument in Theorem \ref{thm:convex-order} applies to \(J_{F,C}\).

For the converse, suppose the condition fails at \(\mu_0\). By continuity, the active kernel is strictly concave in a neighborhood of the corresponding need. We construct an interior equilibrium in that neighborhood. Choose \(V\in(0,1)\), set \(p=\mu_0V^\gamma\), and thus obtain \(r=\mu_0^{1/\gamma}V\). By taking \(V\) sufficiently small, we can choose \(\beta\in((V-r)/C(\mu_0),V/C(\mu_0))\) while maintaining \(\beta\leq1\). Set
\begin{equation}
 g_0=V-\beta C(\mu_0)\in(0,r).
 \label{eq:A-converse-g0}
\end{equation}

With a point mass at \(g_0\), the integral is \(J_{F,C}(V)=1/\mu_0-1\), and the equilibrium equation becomes \(p[1+J_{F,C}(V)]=V^\gamma\). The candidate is interior and strictly solvent.

It remains to implement the comparison within the regular density class. Approximate the point mass by a smooth positive density \(F_1^\varepsilon\) on a sufficiently small common support around \(g_0\). Continuity of the unique crossing implies \(V_1^\varepsilon\to V\). For small \(\varepsilon\), the kernel at this equilibrium remains strictly concave near the central need. Spread a small central mass symmetrically within that neighborhood, leaving a common positive background density. The resulting distribution \(F_2^\varepsilon\) is a mean-preserving spread of \(F_1^\varepsilon\), retains global solvency, and satisfies \(J_{F_2^\varepsilon,C}(V_1^\varepsilon)<J_{F_1^\varepsilon,C}(V_1^\varepsilon)\). Its unique crossing is therefore at a strictly lower value. Smallness of the perturbation preserves the interior regime.

For the benchmark prize, direct differentiation gives
\begin{equation}
 2C'(\mu)+\mu C''(\mu)
 =\frac{1}{\gamma}\left(1+\frac{1}{\gamma}\right)
 \mu^{1/\gamma-1}>0.
 \label{eq:A-power-curvature}
\end{equation}
\end{proof}

\begin{proof}[Proof of Proposition \ref{prop:audit-path}] For an imperfect audit, the relevant government incentive is obtained by substituting \(\widetilde\beta=\beta(1-\alpha)\). Holding \(V\) fixed, a higher \(\widetilde\beta\) increases the kernel and its integral, strictly whenever some needs are active. The unique-crossing equation consequently gives a value that increases with \(\widetilde\beta\) and decreases with auditing. The labor and probability comparisons follow from the formulas for the interior and full-mimicry regimes.

The betrayal threshold is reached when \(r\geq\overline g+\widetilde\beta\). Solving this condition for \(\alpha\) gives \eqref{eq:audit-deterrence}. If the initial equilibrium has full mimicry, the expression \(p[1+J_{F,\widetilde\beta}(1)]\) starts at or above one and falls continuously below one before \(\alpha_D\). Strict monotonicity while the active set is nonempty gives a unique crossing \(\alpha_F\). It may equal zero when the unaudited economy is already at the boundary. If the initial equilibrium is partial, this first transition is absent.

At the transitions, the values are respectively \(1\) and \(r\), so both the value and the allocation formulas agree on either side. Perfect verification gives betrayal directly. Finally, \(D(V,\alpha)\) changes both because the kernel depends on the effective weight and because equilibrium value adjusts. There is no general monotonicity of current concealed revenue across the entire audit path; the full-mimicry and betrayal portions already exhibit different responses.
\end{proof}

\begin{proof}[Proof of Proposition \ref{prop:audit-welfare}] We begin with the accounting in \eqref{eq:welfare}. Expected retained period-1 output is \([p+(1-p)m]L=L^{1+\gamma}\). Subtracting labor cost gives \(\gamma L^{1+\gamma}/(1+\gamma)\). Honest spending and its levy offset under \(B(G)=G\); opportunistic ordinary levies have ex-ante measure \(p\psi\,\dd F\), giving the deduction \(pD\).

At an unaudited ordinary levy \(g\), multiply expected terminal surplus by the probability density of observing that levy. Bayes' rule reduces the product to \(p\,\dd F(g)\,\gamma\mu(g)^{1/\gamma}/(1+\gamma)\). With an audit, the full-information surplus \(\gamma/(1+\gamma)\) is obtained only under the honest type. Integrating and discounting gives the terminal term in \eqref{eq:welfare}.

To compare behavior along the audit path, let \(q=\widetilde\beta\). Define the following moments over the active need set:
\begin{equation}
 A_j=\int (V-g)^j\left(\frac{q}{V-g}\right)^\gamma\dd F(g).
 \label{eq:A-Aj}
\end{equation}

In partial mimicry, differentiate the equilibrium equation \eqref{eq:interior-root} and substitute into the derivative of current welfare. The result is
\begin{equation}
 \frac{\dd W_1}{\dd q}
 =\frac{\gamma p}{q}
 \left[A_1-
 \frac{pA_0^2}{V^{\gamma-1}+pA_{-1}}
 \right]>0.
 \label{eq:A-current-interior}
\end{equation}

Cauchy--Schwarz gives \(A_0^2\leq A_1A_{-1}\), and the denominator is strictly larger than \(pA_{-1}\). The displayed derivative is therefore positive. Since auditing lowers \(q\), current welfare strictly falls. Differentiating the same equilibrium equation also yields
\begin{equation}
 V_q=\frac{pA_0}{q(V^{\gamma-1}+pA_{-1})}<1,
 \label{eq:A-Vq-interior}
\end{equation}

The active set satisfies \(V-g<q\), implying \(qA_{-1}>A_0\) and the strict upper bound on \(V_q\).

In full mimicry, labor stays at one and \(J_F(V)=(1-p)/p\). Differentiation of this adding-up condition and of \(D\) gives
\begin{equation}
 \frac{\dd D}{\dd q}
 =\frac{\gamma}{q}
 \left(\frac{A_0^2}{A_{-1}}-A_1\right)\leq0.
 \label{eq:A-current-full}
\end{equation}

Current welfare is a constant minus \(pD\). Cauchy--Schwarz therefore implies that it falls weakly with auditing, and strictly if active needs are non-degenerate. On this branch \(V_q=A_0/(qA_{-1})<1\). In betrayal only, current strategies and welfare do not change with the audit probability.

The identity \(1-Q=q^{-1}\E[(G+q-V)_+]\) gives \eqref{eq:terminal-welfare-stoploss}. On an active branch, \(V_q<1\) means that the expected positive part rises with \(q\). Thus terminal welfare rises with auditing, strictly while mimicry remains active. This establishes the opposite signs of the two gross welfare components.

At \(V=1\) and \(V=r\), the equilibrium formulas and welfare components are continuous. A continuous capacity cost therefore makes net welfare continuous on the compact interval \([0,1]\), ensuring existence of an optimum. For \(\alpha\geq\alpha_D\), ordinary levies reveal honesty and confiscation reveals opportunism. Additional audits change neither allocation nor information. A non-decreasing cost allows selection of an optimum at or below \(\alpha_D\), and a strictly increasing cost excludes any larger optimum. Finally, the two costless examples in Section \ref{sec:numerics} establish that the intertemporal welfare response can have either sign.
\end{proof}

\section{Relaxing global solvency}
\label{app:relax-solvency}

The main solvency assumption gives a common feasible domain for every audit probability, including perfect verification. For a fixed imperfect audit, one can allow \(\overline g\geq r\), provided \(\overline g<1\) is retained. The latter bound is necessary for strict financing feasibility because even full mimicry supplies no more than one unit of output.

A sufficient replacement condition is divergence of the mimicry integral at the upper support bound. With \(\beta\) replaced by the positive effective weight when auditing is present, require

\begin{equation}
 \lim_{V\downarrow\overline g}J_F(V)=+\infty.
 \label{eq:A-upper-tail}
\end{equation}
If the density is bounded away from zero near \(\overline g\), the condition holds for \(\gamma\geq1\). More generally, if \(f(g)\asymp(\overline g-g)^\kappa\), divergence occurs when \(\kappa\leq\gamma-1\).

The crossing argument can then start at \(V\downarrow\overline g\), where its left-hand side diverges, instead of at \(r\). Since \(\overline g<1\), it produces a strictly solvent active root: either partial mimicry with \(L=V>\overline g\), or full mimicry with \(L=1>\overline g\). When \(\overline g\geq r\) and the audit is imperfect, betrayal only is not a separate regime under this divergence condition.

If the limit is finite, denote it by \(J_b\). For \(r<\overline g<1\), a strictly solvent root exists if and only if

\begin{equation}
 p(1+J_b)>\overline g^\gamma.
 \label{eq:A-finite-tail}
\end{equation}
This is the condition for a positive crossing gap at the lower endpoint \(V=\overline g\). If it fails, the argument does not deliver a strictly solvent equilibrium of the specified form. Equality is a boundary case with no financing slack and requires separate treatment; a strict failure cannot support a root with every honest need below labor.

Even when the divergent-tail condition holds for every imperfect audit, perfect verification is different. It removes the continuation reward from mimicry altogether. If \(\overline g>r\), the resulting betrayal base cannot finance some honest needs, and a default or outside-finance rule is needed. At \(\overline g=r\), every need is only weakly financeable, and the top levy meets the action defined as confiscation. The strict inequality in Assumption \ref{ass:solvency} avoids both issues.

\section{Proportional taxation and fiscal capacity}
\label{app:rates}

This appendix changes the tax instrument as well as the continuation economy. Ordinary finance is now proportional to individual output, and the honest government also faces spending needs in the terminal period. The purpose is to show why the benchmark comparisons need not extend to that setting. We characterize a feasible upper financing branch and give counterexamples on it; we do not assert a global equilibrium classification for the rate economy.

Suppose terminal reputation is \(z\), aggregate labor is \(X\), and the honest government finances a need with mean \(\mu_G>0\) using the proportional rate \(G/X\). The opportunist confiscates output. If every honest realization is feasible, the expected marginal retention rate is \(z(1-\mu_G/X)\). With labor curvature \(\gamma\geq1\), the symmetric labor condition is

\begin{equation}
 X^{\gamma+1}=z(X-\mu_G).
 \label{eq:A-rate-labor}
\end{equation}
A positive financing base requires \(X>\mu_G\). Expressing the posterior as \(z=X^{\gamma+1}/(X-\mu_G)\) shows that its minimum occurs at
\begin{equation}
 X_* =\frac{\gamma+1}{\gamma}\mu_G,
 \qquad
 \underline z=\frac{(\gamma+1)^{\gamma+1}}{\gamma^\gamma}\mu_G^\gamma.
 \label{eq:A-rate-threshold}
\end{equation}
If \(\underline z<1\), each \(z\in(\underline z,1]\) has two positive financing roots, with a double root at \(\underline z\) and none below it. If \(\underline z>1\), no posterior supports positive full financing; if \(\underline z=1\), only the double root at complete trust remains. These statements presume that every realized honest need can be financed, since otherwise the marginal tax wedge must incorporate the treatment of default.

We use the upper root. It is locally stable under the labor-response iteration \(X_{\mathrm{next}}=[z(1-\mu_G/X)]^{1/\gamma}\): at a fixed point the derivative is \(\mu_G/[\gamma(X-\mu_G)]\), which is below one on the upper branch and above one on the lower branch. This is a stability statement for that specified adjustment, rather than an additional equilibrium-selection theorem. For \(\gamma=1\), the threshold is \(\underline z=4\mu_G\) and the upper solution is

\begin{equation}
 X_+(z)=\frac{z+\sqrt{z^2-4z\mu_G}}{2}.
 \label{eq:A-rate-upper}
\end{equation}

Let \(q=\beta(1-\alpha)>0\), and let \(X_1\) be the upper solution at posterior one. At an active ordinary levy, indifference requires the continuation base \(x=(V-g)/q\). Upper-branch inversion is valid only for \(x\in[X_*,X_1]\). A need with \(x\geq X_1\) is not mimicked and has zero likelihood; a candidate with \(x<X_*\) cannot be supported on the selected branch. In particular, a continuous equilibrium supported on that branch requires \(\overline g\leq V-qX_*\). Within the valid range, Bayes' rule gives

\begin{equation}
 z=P(x)=\frac{x^{\gamma+1}}{x-\mu_G},
 \qquad
 \phi(x)=\frac{1}{P(x)}-1
 =\frac{x-\mu_G}{x^{\gamma+1}}-1.
 \label{eq:A-rate-kernel}
\end{equation}
Although the likelihood is increasing in the active need, it need not be convex. Differentiation with respect to \(g\) gives

\begin{equation}
 \frac{\partial^2\phi}{\partial g^2}
 =\frac{(\gamma+1)[\gamma x-(\gamma+2)\mu_G]}
 {q^2x^{\gamma+3}}.
 \label{eq:A-rate-curvature}
\end{equation}
Near \(X_*\), the numerator is negative. The part of the kernel associated with the smallest viable financing bases is therefore concave. A second relevant object is the retained-output contribution \(\kappa_V(g)=(V-g)\phi((V-g)/q)\) in the first-period labor condition. Its curvature is

\begin{equation}
 \kappa_V''(g)=\frac{\gamma}{q}
 \frac{(\gamma-1)x-(\gamma+1)\mu_G}{x^{\gamma+2}}.
 \label{eq:A-rate-rent-curvature}
\end{equation}
For unit curvature, this expression is strictly negative throughout the active range. Thus neither the likelihood comparison nor the labor condition can be signed by importing convexity from the fixed-levy economy.

To obtain the first-period equation, consider an equilibrium in which the opportunist mixes with confiscation, so \(V=L\). Under an honest government, the expected retained output is \(V-\mu_G\). Under an opportunistic ordinary levy \(g\), retained output is \(V-g\), and the relative likelihood is \(\phi((V-g)/q)\). Combining this marginal labor condition with Bayes' rule gives

\begin{align}
 V^{\gamma+1}
 &=p\left[V-\mu_G+
   \int (V-g)\phi\!\left(\frac{V-g}{q}\right)dF(g)\right],
 \label{eq:A-rate-fixed-point}\\
 m&=\frac{p}{1-p}\int
   \phi\!\left(\frac{V-g}{q}\right)dF(g),
 \label{eq:A-rate-mimicry}
\end{align}
Here \(m\) is the opportunist's conditional probability of an ordinary levy. The integrals apply on the valid upper-branch range, with zero likelihood at unmimicked needs. These equations are the partial-mimicry closure, not a formula for a full-mimicry regime in which value could exceed labor.

Consider the parameters

\begin{equation}
 \gamma=1,\qquad \mu_G=0.2,\qquad q=0.8,\qquad p=0.75.
 \label{eq:A-rate-example}
\end{equation}
For a deterministic need \(G=0.2\), the admissible upper-branch mixing solution is \(V_0=0.6000000\) and \(m_0=0.6000000\). Now replace that need by the equal-probability spread \(G\in\{0.16,0.24\}\). On the same admissible branch, the scalar equation becomes

\begin{equation}
 V^2=0.75\left[0.8-
 0.128\frac{V-0.2}{(V-0.2)^2-0.0016}\right]
 \label{eq:A-rate-example-root}
\end{equation}
Its unique admissible solution is
\begin{equation}
 V_1=0.5957515,
 \qquad m_1=0.5987586.
 \label{eq:A-rate-reversal}
\end{equation}
The two continuation posteriors are \(0.8607\) and \(0.8082\), both above the financing threshold \(4\mu_G=0.8\). All honest needs are feasible. The spread therefore lowers both opportunistic value and mimicry while remaining on the selected branch. Convolving both need distributions with the same sufficiently small continuous mean-zero noise preserves their convex-order ranking and, by continuity, the strict reversal. The displayed discrete example is not the source of the reversal.

Auditing can also have a different composition effect under proportional taxation. At a regular decreasing crossing, a higher audit rate locally reduces value, but the mimicry probability need not fall. For example, set \(\gamma=1\), \(\mu_G=0.2\), \(p=0.7\), \(F=\delta_{0.2}\), and \(\beta=0.3\). With no audit, the admissible solution is \((V,m)=(0.3665074,0.3557898)\). At \(\alpha=1/15\), the effective weight is \(q=0.28\), and \((V,m)=(0.3514951,0.3851001)\). Even the probability of undetected mimicry increases: \((1-\alpha)m=0.3594267>0.3557898\).

The last comparison reduces rents and current labor, but raises mimicry and lowers confiscation. It illustrates the additional force introduced by the ordinary marginal tax wedge. Both the risk theorem and the audit-composition result in the main text consequently belong to the specified fixed-levy economy; extending them to proportional taxation requires further restrictions.

\end{document}